\newtheorem{theorem}{Theorem}
\newtheorem{corollary}{Corollary}
\newtheorem{lemma}{Lemma}
\begin{document}
\title{Optimal Data Attacks on Power Grids:\\ Leveraging Detection \& Measurement Jamming}
\author{\IEEEauthorblockN{Deepjyoti Deka,~ Ross Baldick ~and~ Sriram Vishwanath}
\IEEEauthorblockA{Department of Electrical \& Computer Engineering, The University of Texas at Austin\\
Email: deepjyotideka@utexas.edu, baldick@ece.utexas.edu, sriram@ece.utexas.edu }}

\maketitle
\begin{abstract}
Meter measurements in the power grid are susceptible to manipulation by adversaries, that can lead to errors in state estimation. This paper presents a general framework to study attacks on state estimation by adversaries capable of injecting bad-data into measurements and further, of jamming their reception. Through these two techniques, a novel `detectable jamming' attack is designed that changes the state estimation despite failing bad-data detection checks. Compared to commonly studied `hidden' data attacks, these attacks have lower costs and a wider feasible operating region. It is shown that the entire domain of jamming costs can be divided into two regions, with distinct graph-cut based formulations for the design of the optimal attack. The most significant insight arising from this result is that the adversarial capability to jam measurements changes the optimal 'detectable jamming' attack design only if the jamming cost is less than half the cost of bad-data injection. A polynomial time approximate algorithm for attack vector construction is developed and its efficacy in attack design is demonstrated through simulations on IEEE test systems.
\end{abstract}

\section{Introduction}
As power grids around the world move towards smarter devices and distributed control, it has led to large scale placement of cyber meters like PMUs \cite{pmu1} for real-time data collection. This can have a variety of positive implications for the grid, notably monitoring of the grid state for improved reliability and optimal electricity prices. However, `smart' meters and associated communication infrastructure are vulnerable to adversarial attacks by rogue agents and online viruses. Examples of these attacks include GPS spoofing attack on PMUs \cite{todd}, `Dragonfly' virus \cite{dragonfly}, Arora test attack \cite{arora} among others. Such data attacks can lead to incorrect estimation of the grid state and result to large scale blackouts. The extreme consequences of adversarial attacks and counter strategies has attracted significant interest from the research community. \cite{hidden} first introduced the problem of undetectable data attacks that bypass standard bad-data tests present in the state estimator. The optimal attack vector comprising of the compromised measurements is constructed in \cite{hidden} using projection matrices. Subsequent work has looked at the problem of constructing the optimal attack under different grid conditions and adversarial objectives. Attack construction that require minimum number of measurement corruptions are presented in \cite{poor} using $l_0 -l_1$ relaxation. Reference \cite{sou} analyzed a system with phasor measurements and used mixed integer linear programming to create the optimal attack. For systems with phasor and line flow measurements and PMUs, \cite{deka,deka1} discusses graph cut based attack designs on specific buses on the grid and associated protection strategies. Similarly, other protection schemes have been discussed in literature, including heuristic protection schemes \cite{thomas}, greedy schemes \cite{poor,deka1} among others.

It is worth noting that most research on power grid cyber-security has focussed on designing `hidden' attack vectors that completely evade the bad-data detection tests at the grid's state estimator. However, the authors of \cite{frame} showed that data `framing' attacks can be constructed that changes the values in half of the measurements in the attack vector while damaging the other half. The attack is initially detected by the estimator but becomes feasible after the bad-data identifier removes the damaged measurements. In \cite{dekaISGT}, a generalized `detectable' attack model was presented for systems where a subset of the measurements are incorruptible. The authors in \cite{dekaISGT} showed that by focussing on the bad-data identifier, the cardinality of the optimal `detectable' data attack in most cases can be reduced by greater than $50\%$ ($50\%$ in worst case) of that of `hidden' attacks. More importantly, the `detectable' attack framework in \cite{dekaISGT} is shown to produce feasible attacks in operating regimes that are secure against `hidden' attacks. In this work, we consider the `detectable' attack framework in \cite{dekaISGT} but with one major modification to the adversary's capability. In addition to modifying insecure measurements (bad-data injection) as described in previous work, the adversary considered here is capable of jamming or blocking measurement communication to the state estimator. Note that measurement jamming can be conducted using commercial jammers (for wireless communication), Denial of Service attack \cite{ddos} or by physically damaging the communication channel. Compared to bad-data injection that requires measurements to be changed by precise real values, measurement jamming is in fact less resource-intensive. One can make the realistic assumption that the non-negative cost of jamming lies in the range between $0$ and the cost of injecting bad-data into a measurement.

The overarching goal of this work is thus to \textit{study the impact of adding measurement jamming to the adversary's arsenal on the design of the optimal `detectable' data attacks.} Here, we formulate the optimal attack vector design as a graph cut problem based on the necessary and sufficient conditions for feasibility. We show that the entire range of values for measurement jamming cost can be divided into two intervals with different optimal attack formulations that lead to two distinct design strategies. Specifically, we prove that measurement jamming significantly alters the optimal `detectable' attack design only if the jamming cost is less than half the cost of data-injection. In contrast, we show that for `hidden' data attacks, measurement jamming leads to a single simple attack strategy independent of the jamming cost. We provide recursive min-cut based algorithms to design the optimal attack over the entire range of jamming cost values and show the cost improvement derived from measurement jamming through simulations on IEEE test cases \cite{testsystem}. By discussing the scope of measurement jamming as an adversarial strategy, our work thus provides a potent and realistic generalization of current data attack frameworks. Finally, we show that number of incorruptible measurements needed to prevent `detectable' attacks scales at least with the total number of measurements. This is much higher than `hidden' attacks where the security needs scale with the number of buses in the system \cite{deka1}. Thus, in addition to significantly reducing the cost of data attacks, our attack framework also undermines measures of grid resilience based on `hidden' attacks.

The rest of this paper is organized as follows. The next section presents a description of the system models used in state estimation, bad-data detection and identification. The novel adversarial attack model with jamming is introduced in Section \ref{sec:attack} along with conditions necessary for attack feasibility. Section \ref{sec:jammingattack} analyzes how the cost of jamming affects the attack strategy and grid resilience and presents a graph theoretic formulation for the optimal attack design. Our  algorithm to design an optimal attack vector is presented in Section \ref{sec:algo}. Simulations of the proposed algorithm for the range of jamming and bad-data injection costs on IEEE bus systems and comparisons with existing work are shown in Section \ref{sec:results}. Finally, concluding remarks and future directions of work are presented in Section \ref{sec:conclusion}.

\section{State Estimation and Bad-Data Detection in Power Grids}
\label{sec:estimation}
We denote the power grid by a set $V$ of buses (nodes) connected by a set $E$ of transmission lines (directed edges). Figure \ref{14bus} shows the graph representation of the IEEE $14$ bus test system \cite{testsystem}.
\begin{figure}
\centering
\includegraphics[width=0.43\textwidth, height = .28\textwidth]{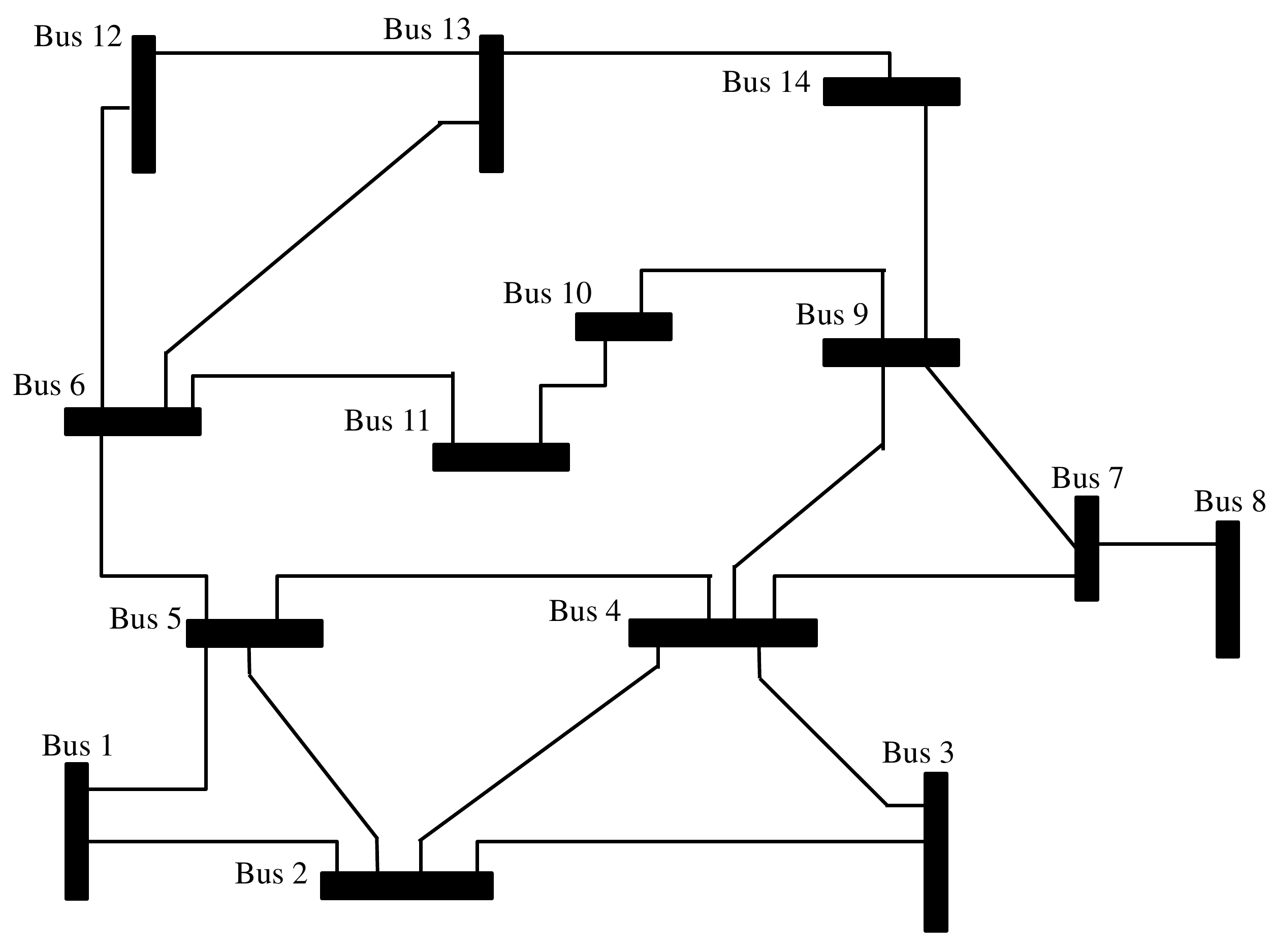}
\caption{IEEE 14-bus test system \cite{testsystem}}
\label{14bus}
\end{figure}

\textbf{Measurement Model:} We use DC power flow model \cite{abur} for the grid here where nodal line voltage magnitudes and line resistances are ignored. It is given by:
\begin{align}
z = Hx + e \label{dcmodel}
\end{align}
Here $z \in \mathbb{R}^m$ is the $m$ length vector of measurements. We consider two kinds of measurements in the grid: a) flow measurements on lines and b) voltage phasor measurements on buses, measured by conventional meters and phasor measurement units. $x \in \mathbb{R}^n$ denotes the state vector of length $n = |V|$ that comprises of the phase angles at the buses in the grid. $H$ is the measurement matrix and $e$ is a zero mean Gaussian measurement noise vector with known covariance $\Sigma$. Let the $k_1^{th}$ and $k_2^{th}$ entries in $z$ represent the power flow on line $(i,j)$ from nodes $i$ to $j$ and the voltage phasor at node $i$ respectively. Then, $z(k_1) = B_{ij}(x(i)-x(j)),~ z(k_2) = x(i)$. Here $B_{ij}$ is the susceptance of line $(i,j)$. The corresponding rows in $H$ thus have the following structure:
\begin{align}
H(k_1) = [0..0~~B_{ij}~~ 0..0~~-B_{ij}~~0..0] \label{line}\\
H(k_2) = [0..0~~ 1~~ 0..0] \label{bus}
\end{align}

We assume $m>n$ and full column rank of $H$, without a loss of generality. Further, without a loss of generality, we introduce a $(n+1)^{th}$ reference bus with phase angle $0$ in our system and represent it by augmenting $0$ to the state vector $x$. Let $z$ include the phase angle measurement for some bus $i$. Note that the angle measured can be considered equivalent to a flow on a hypothetical line of unit conductance between bus $i$ and the reference bus (with phase $0$). Thus, we can add an extra binary valued column $h^{g}$ corresponding to the reference bus in matrix $H$ to get $z = Hx = [H|h^g]\setlength{\arraycolsep}{2pt} \renewcommand{\arraystretch}{0.8}\begin{bmatrix} x \\0 \end{bmatrix}$. Here $h^g(k) = -1$ if $z(k)$ measures a phase angle and $0$ otherwise. Observe that after addition of the reference bus in the system, all measurements now correspond to flow measurements. Abusing notation, we use $x$ and $H$ to denoted the augmented state vector and measurement matrices respectively from this point.

\textbf{State Estimator:} We consider a least-square state estimator in the grid as shown in Figure \ref{estimator} \cite{monticelli,abur}.
\begin{figure}
\centering
\includegraphics[width=0.44\textwidth]{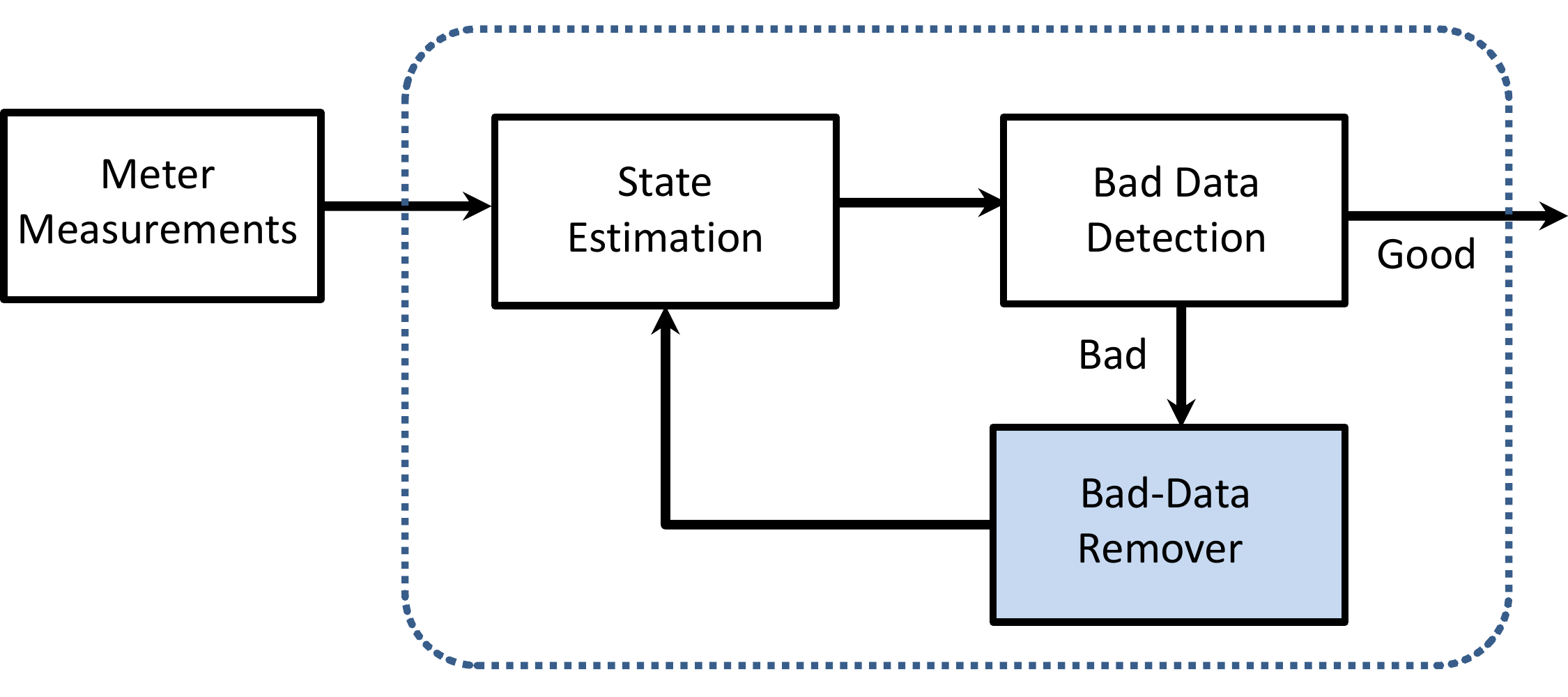}
\caption{State Estimator for a power system \cite{monticelli,abur}}
\label{estimator}
\end{figure}

The state vector estimate $x^*$ for a given measurement vector $z$ is generated by minimizing the weighted measurement residual $J(x,z) = \|\Sigma ^{-.5}(z-Hx)\|_2$ over variable $x$. Following estimation, a threshold ($\lambda$) based bad-data detector determines the presence of erroneous measurements by the following test:
\begin{align}
\|\Sigma ^{-.5}(z-Hx^*)\|_2 &\leq \lambda ~~\text{accept~~~} x^*\nonumber\\
                            &> \lambda ~~\text{detect bad-data} \label{test}
\end{align}

If the test detects bad-data, the measurements are sent for eliminating the bad-data as described below, following which the state estimate is recomputed.

\textbf{Bad-data Removal:} Note that the measurement residue vector $r$ for measurement $z$ and estimated $x^*$ is given by \cite{monticelli, abur}:
\begin{align}
r = z-Hx^* = [I - H(H^T\Sigma^{-1}H)^{-1}H^T\Sigma^{-1}]z
\end{align}
with variance $R_r$. Assuming that each measurement is independently affected by natural bad data, the state estimator removes the least number of erroneous measurements such that the resulting residual satisfies the threshold condition in Eq.~(\ref{test}) while preserving full column rank in $H$.  For a single removal, the optimal strategy is to remove the measurement with largest normalized residual \cite{monticelli}. However, for multiple bad-data entries, the optimal removal strategy is a non-convex problem \cite{monticelli, dekaISGT}.

We assume in the remainder of this paper that the measurement data $z$, in the absence of any adversarial manipulation, is reasonably clean and capable of producing the correct state estimate $x^*$ by passing the bad-data detection test.

\subsection{Attack Models}

Let $a$ denote the injected adversarial attack vector that is added to correct measurements in $z$ to generate the compromised measurement vector $z+a$. Traditional attack models have focussed on bypassing the bad-data detector by ensuring that the measurement residual in Eq.~(\ref{test}) remains unchanged following the injection of bad-data. Mathematically, this requires $a = Hc \neq 0$ for some $c \in \mathbb{R}^n$ as $\|\Sigma ^{-.5}(z-Hx^*)\|_2 = \|\Sigma ^{-.5}(z+a-H(x^*+c))\|_2$. Thus, a \textbf{`Hidden' Attack} results that produces an erroneous state vector $x^* +c$ \cite{hidden}. Next we describe `detectable' data attacks \cite{dekaISGT} that are the focus of this paper.

\textbf{`Detectable' Data Attack:} From the bad-data removal scheme described earlier, it is clear that an attack vector $a \neq 0$ will change the state estimate if removal of some other $k < \|a\|_0$ measurements (distinct from the attack vector) satisfies the bad-data detection test. For a nonzero $Hc$, consider the adversarial strategy that excludes (or does not corrupt) less than $50\%$ of the non-zero entries in $Hc$ from the attack vector $a$. Note that $a$ still gives a feasible `detectable' attack as the non-zero terms in $(Hc-a)$ are identified as bad-data instead of vector $a$. This happens as $\|a\|_0 > \|Hc-a\|_0$. In the next section, we formulate in detail the design of the optimal `detectable' data attack and the use it to analyze changes that arise due to the adversarial capability to jam measurements.

\section{`Detectable' Attack with Measurement Jamming}
\label{sec:attack}
In a general setting, few of the measurements in the grid may be incorruptible due to geographical isolation or encryption. We denote this set of measurements secure from adversarial corruption by $S$. Note that measurements in $S$ suffer from normal bad-data arising from measurement noise. The remaining insecure measurements belong to set $S^c$. The measurements included in the minimum cost `detectable' attack are given by non-zero terms in the optimal vector $d*$ in the following optimization problem \cite{dekaISGT}:

\begin{align} \label{opt_attacknew} \tag{P-1}
&\smashoperator[l]{\min_{d \in \{0,1\}^m, c \in \mathbb{R}^{n+1}}} \|d\|_{0} \\
\text{s.t. ~} &a = Hc, c \neq \textbf{0}, c(n+1) = 0\nonumber\\
&d(i) = 0 ~\forall i \in S_{m} ~~(\text{secure measurements}) \nonumber\\
& \|d\|_{0} > \|a\|_0/2 ~~(\text{for feasibility})\label{cond1} \\
& rank(DH) = n,~ diag(D) = \textbf{1} - (\textbf{1}-d)*a_{spty} \label{cond2}
\end{align}

Here, $a*b$ refers to the element-wise multiplication between vector $a$ and $b$, while $a_{spty}$ denotes the sparsity pattern in vector $a$. Condition (\ref{cond1}) ensures that the estimator removes measurement entries corresponding to non-zero terms in $(\textbf{1}- d)*a$ as bad-data, instead of the data injected in $d*a$. $D$ is a diagonal matrix whose diagonal entries are $0$ for removed data and $1$ otherwise. $DH$ is the measurements matrix after bad-data removal. Condition (\ref{cond2}) keeps it at full rank. The attack passes the bad-data detection test as it lies in the column space of $DH$. It is worth restating that as each row in augmented $H$ corresponds to a flow measurement, $H$ is equivalent to a susceptance weighted incidence matrix of a graph $G_H$ with $n+1$ nodes and edges given by rows in $H$. Due to this structure of $H$, it can be shown that \cite{deka,deka1,dekaISGT} the optimal attack $a^* = Hc$ corresponds to a $0-1$ binary valued nodal vector $c$. Further, the optimal attack strategy for Problem $\ref{opt_attacknew}$ doesn't change if $H$ is replaced by the un-weighted incidence matrix $A_H$ of graph $G_H$ ($A_H(i,j) = 1(\hat{H}(i,j) > 0) - 1(\hat{H}(i,j) < 0)$) as for a binary valued $c$, $A_Hc$ and $Hc$ have the same set of non-zero terms (identical sparsity pattern). Note that non-zero values in $A_Hc$ actually represents cut edges in graph $G_H$ between nodes marked $1$ and $0$. This leads to the following result (Theorem 2 in \cite{dekaISGT}) for optimal attack for Problem \ref{opt_attacknew}.

\begin{theorem}[{\cite[Theorem 2]{dekaISGT}}] \label{previous}
Let $C^*$ denote the minimum cardinality cut in $G_H$ with a minority of secure cut-edges ($|C^* \cap S| < |C^*|/2 $ ). An optimal `detectable' attack for Problem \ref{opt_attacknew} is given by any $\lfloor1+ |C^*|/2\rfloor$ cut-edges in $C^* \cap S^c$ (insecure cut edges).
\end{theorem}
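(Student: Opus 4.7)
The plan is to recast Problem~(\ref{opt_attacknew}) as a combinatorial min-cut problem on $G_H$. The paragraph preceding the theorem establishes that, without loss of optimality, $c$ can be taken binary-valued with $c(n+1) = 0$ and $H$ may be replaced by the unweighted incidence matrix $A_H$. Consequently every feasible nonzero attack corresponds to a bipartition of the nodes (reference bus on the $0$-side), and the support of $a = A_H c$ is exactly the set of crossing edges $C \subseteq E$; hence $\|a\|_0 = |C|$ and the remaining decision is the subset $d \subseteq C$.

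For a fixed cut $C$, condition~(\ref{cond1}) requires $|d| > |C|/2$ and the secure-measurement constraint forces $d \subseteq C \cap S^c$. The smallest integer exceeding $|C|/2$ is $\lfloor 1 + |C|/2 \rfloor$, and this value is attainable as a valid $d$ if and only if $|C \cap S^c| \geq \lfloor 1 + |C|/2 \rfloor$, equivalently $|C \cap S| < |C|/2$---the minority secure-edge condition in the theorem. Because $\lfloor 1 + |C|/2 \rfloor$ is monotone in $|C|$, minimizing over all feasible cuts gives $\|d^*\|_0 = \lfloor 1 + |C^*|/2 \rfloor$, realized by \emph{any} choice of $\lfloor 1 + |C^*|/2 \rfloor$ edges from $C^* \cap S^c$.

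For the rank condition~(\ref{cond2}), the edges removed by the estimator are indexed by $C^* \setminus d$, a proper subset of $C^*$ of size at most $\lceil |C^*|/2 \rceil - 1$. Since $d$ contains at least one crossing edge, it suffices to show that $G_H$ with $C^* \setminus d$ deleted remains connected. A standard min-cut argument shows that each side of a minimum cut induces a connected subgraph of $G_H$: otherwise, the boundary of a connected component of the disconnected side would form a strictly smaller cut contained in $C^*$, contradicting minimality. Combined with a surviving edge of $d$ bridging the two sides, this yields a spanning connected subgraph, so $DH$ retains rank $n$ for any admissible $d$.

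The rank-connectivity step is the main subtlety, because $C^*$ is minimum only among cuts satisfying the minority secure-edge constraint rather than being the global minimum cut. The component-swap could in principle produce a smaller cut that itself violates the minority condition, so the contradiction must be refined---either by appealing separately to the global min cut (which is no larger than $C^*$) or by arguing that the swap inherits the minority property from $C^*$. The remaining steps follow directly from the cut interpretation of $A_H c$ established in the paper.
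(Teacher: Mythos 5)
The paper itself does not prove this theorem --- it is imported from \cite{dekaISGT} and the proof is explicitly omitted ``for space constraints'' --- so there is no in-paper argument to compare against and your proposal must stand on its own. The counting half of your argument is correct and is exactly the reduction the surrounding text sets up: for a fixed cut $C$ the feasibility constraint $\|d\|_0 > \|a\|_0/2 = |C|/2$ forces $\|d\|_0 \geq \lfloor 1+|C|/2\rfloor$; this value is achievable inside $C\cap S^c$ precisely when $|C\cap S| < |C|/2$; and monotonicity of $\lfloor 1+|C|/2\rfloor$ in $|C|$ reduces optimality to minimality of $|C^*|$ among feasible cuts.

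The one genuine gap is the rank step, which you have correctly located but not closed. Of the two repairs you float, the first (appealing to the global min cut) does not work: the global minimum cut being no larger than $C^*$ says nothing about whether the two sides of $C^*$ induce connected subgraphs. The second repair does work and needs to be carried out. If the side $X$ of $C^*$ splits into components $X_1,\dots,X_k$ with $k\geq 2$, then $C^*$ is the disjoint union of the boundaries $\partial X_1,\dots,\partial X_k$ (there are no edges between distinct $X_i$), so $\sum_i |\partial X_i \cap S| = |C^*\cap S| < |C^*|/2 = \sum_i |\partial X_i|/2$, and by pigeonhole some $\partial X_i$ has a strict minority of secure edges. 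Since $G_H$ is connected, every $\partial X_j$ is nonempty, hence $|\partial X_i| < |C^*|$, and $\partial X_i$ is a strictly smaller feasible cut (with the reference bus still on the $0$ side, after complementing the partition if necessary), contradicting minimality of $C^*$; the same argument handles the other side. With both sides connected and $d\neq\emptyset$ re-bridging them, deleting $C^*\setminus d$ leaves $G_H$ connected and $rank(DH)=n$ follows. Adding that pigeonhole paragraph completes your proof; everything else follows the cut formulation the paper already establishes before stating the theorem.
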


We ignore the proof here for space constraints. Observe that if $d$ is restricted to an all-$1$ vector, Problem \ref{opt_attacknew} reduces to the problem of determining the optimal `hidden' attack. The optimal attack in that case is given by the minimum cardinality cut in $G_H$ that does not include any secure edge in $S^m$ \cite{deka,deka1}.

\textbf{`Detectable Jamming' Attack:} We now analyze an adversary with the capacity to jam insecure measurements in addition to manipulating their values by bad-data injection. Secure measurements are assumed to be Let $p_J$ and $p_I$ be the cost associated with jamming and bad-data injection into an insecure measurement in the grid respectively. We assume that $0 \leq p_I\leq p_I$ as the range of $p_J$ as jamming is less resource intensive than bad-data injection. This is a reasonable assumption as jamming can even be conducted by introducing garbage values through bad-data injection techniques. For ease of elucidation, we assume that the jamming and manipulation costs are uniform over all measurements in $S^c$, though all analysis follows immediately for variable costs as well. Consider a cut $C$ in graph $G_H$. Let $n^C_S$ and $n^C_{S^c}$ denote the number of secure and insecure edges in cut $C$ with $n^C_{S^c} > n^C_S$ as shown in Fig.~\ref{fig:feasibleattack}. By Theorem \ref{previous}, attack feasibility requires injection into $k^C$ ($k^C > |C|/2$) insecure edges at a cost of $p_Ik^C$. Instead, consider a different strategy where the adversary jams $k^C_J$ insecure measurements. As jammed measurements are not received and ignored by the control center, the cut-size effectively reduces to $|C|-k^C_J$. If the remaining $n^C_{S^c}-k^C_J$ insecure edges in the cut are greater in number than the $n^C_S$ secure edges, the adversary can still attack $k^C_I \geq 1 + \lfloor\frac{|C|-k^C_J}{2}\rfloor$ measurements and generate a feasible attack. As depicted in Fig.~\ref{fig:feasibleattack}, the cost of this new attack is $p_Ik^C_I+p_Jk^C_J$. We term it a `detectable jamming' attack to distinguish it from the original `detectable' attack that doesn't incorporate jamming.

We formulate the design of the optimal `detectable jamming' attack as follows:

\begin{align} \label{opt_attacknew1} \tag{P-2}
&\smashoperator[l]{\min_{d_J, d_I \in \{0,1\}^m}} p_J\|d_J\|_{0}+ p_I\|d_I\|_{0} \nonumber\\
\text{s.t. ~} &a = A_Hc, c \in \{0,1\}^{n+1}-{\textbf{0}}, c(n+1) = 0\nonumber\\
& d_J+d_I \in \{0,1\}^m \label{cond1j}\\
&d_J(i)= d_I(i) = 0 ~\forall i \in S_{m} \label{cond2j}\\
& \|d_I\|_{0} > (\|a\|_0- \|d_J\|_0)/2 ~~(\text{for feasibility})\label{cond3j} \\
& rank(DA_H) = n \text{~~where~} diag(D) = \textbf{1} -(\textbf{1}-d_J-d_I)*|a|\label{cond4j}
\end{align}

The non-zero values in optimal $d_J$ and $d_I$ give the measurements to jam and injection bad-data respectively in the optimal attack. Note that in Problem \ref{opt_attacknew1}, we replaced $H$ with incidence matrix $A_H$ and made $c$ a $0-1$ vector as discussed earlier. Here, condition \ref{cond1j} ensures data injection and jamming cannot occur at the same measurement. The remaining conditions arise from incorruptibility of secure measurements (\ref{cond2j}), feasibility of `detectable' attack (\ref{cond3j}) and full system observability after bad-data removal (\ref{cond4j}). From the discussion preceding Problem \ref{opt_attacknew1}, it is clear that the optimal `detectable jamming' attack has a graph-cut based construction as stated below.

\begin{lemma}\label{construction}
Let $C$ denote a cut in $G_H$ with $(n^C_{S^c}> |C|/2)$ insecure cut-edges. A feasible attack is given by jamming $(k^C_J \geq 0)$ and injecting data into $(\lfloor1+ (|C^*|- k^C_J)/2\rfloor > 0)$ of the $n^C_{S^c}$ insecure cut-edges at a cost of $p_Jk^C_J+p_I\lfloor1+ (|C^*|- k^C_J)/2\rfloor$. The optimal `detectable jamming' attack is given by minimizing the attack cost over variable $k^C_J$ (jammed edges) for all feasible cuts $C$.
\end{lemma}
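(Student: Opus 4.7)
The plan is to prove Lemma \ref{construction} in two halves: first, show that the described jam-and-inject strategy applied to any eligible cut $C$ yields a feasible point of Problem \ref{opt_attacknew1} with the stated cost; second, show that every feasible point of Problem \ref{opt_attacknew1} arises this way, so that the minimum over pairs $(C, k^C_J)$ coincides with the true optimum.

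For feasibility, I would fix a cut $C$ with $n^C_{S^c} > |C|/2$ and take $c \in \{0,1\}^{n+1}$ (with $c(n+1)=0$) to be its node-indicator. Since $A_H$ is the signed incidence matrix of $G_H$, the vector $a = A_H c$ is nonzero precisely on the $|C|$ cut-edges, so $\|a\|_0 = |C|$. Jamming $k^C_J$ insecure cut-edges removes them from the estimator's view, leaving an effective attack of cardinality $|C|-k^C_J$. Applying Theorem \ref{previous} (equivalently condition \ref{cond3j}) to this reduced support, injecting bad-data into any $\lfloor 1 + (|C|-k^C_J)/2 \rfloor$ of the remaining insecure cut-edges forces the bad-data identifier to discard the uncorrupted cut-edges instead of the injected ones, producing a valid detectable attack. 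I would then verify the remaining constraints of Problem \ref{opt_attacknew1} in order: condition \ref{cond1j} holds because jamming and injection are assigned to disjoint subsets of insecure cut-edges; condition \ref{cond2j} holds because only edges in $S^c$ are touched; and summing the per-edge costs yields exactly $p_J k^C_J + p_I \lfloor 1 + (|C|-k^C_J)/2 \rfloor$. The range of admissible $k^C_J$ is $0 \leq k^C_J \leq n^C_{S^c} - n^C_S - 1$, which keeps the insecure edges in the majority after jamming and so guarantees that enough targets remain for the injection step.

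For the converse, let $(d_J, d_I, c)$ be any feasible solution of Problem \ref{opt_attacknew1}. Because $c$ is binary with $c(n+1)=0$, the support of $a = A_H c$ is exactly the cut $C$ separating $\{i : c(i) = 1\}$ from its complement, giving $\|a\|_0 = |C|$. An optimality argument confines the supports of $d_J$ and $d_I$ to $C$: any nonzero entry outside the cut contributes positive cost without affecting either the detectability condition \ref{cond3j} or the observability condition \ref{cond4j}, and can be zeroed out. Setting $k^C_J = \|d_J\|_0$, condition \ref{cond3j} then forces $\|d_I\|_0 \geq \lfloor 1 + (|C|-k^C_J)/2 \rfloor$, so every feasible solution has cost at least $p_J k^C_J + p_I \lfloor 1 + (|C|-k^C_J)/2 \rfloor$ for its induced cut $C$. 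Minimizing the right-hand side over the discrete pair $(C, k^C_J)$ therefore recovers the overall optimum.

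The main obstacle is the rank/observability condition \ref{cond4j}: after jamming some measurements and letting the identifier remove others, the surviving matrix $DA_H$ must retain full column rank $n$. In the cut-based construction, the risk is that jamming together with identifier removal strips away every incident measurement of some node or every edge across the cut, severing observability. I expect this to follow from the standing assumption that the unattacked grid is fully observable and from restricting $k^C_J$ to the feasible range above, but verifying that the combined deletions never exhaust a node's incidence set is the step that demands the most care and is the natural place to invoke a graph-connectivity argument on $G_H$.
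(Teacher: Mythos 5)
Your proposal is correct and takes essentially the same route as the paper: the paper's entire justification for Lemma \ref{construction} is the informal discussion preceding Problem \ref{opt_attacknew1} (jamming $k^C_J$ insecure cut-edges shrinks the effective cut to $|C|-k^C_J$, after which Theorem \ref{previous} applies to the residual cut), which is exactly your forward direction, including the admissible range $0 \leq k^C_J \leq n^{C}_{S^c}-n^{C}_{S}-1$. Your converse step (that every feasible point of Problem \ref{opt_attacknew1} is supported on the cut induced by the binary vector $c$) and your flagged concern about the rank condition \ref{cond4j} are both points the paper leaves implicit rather than proving, so your write-up is, if anything, more careful than the source.
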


It is noteworthy that if $k^C_J = 0$ in Lemma \ref{construction}, we obtain the optimal `detectable' attack (no jamming) as a feasible `detectable jamming' attack. This leads to following important properties.

\begin{corollary}\label{constcorollary}
\begin{itemize}
\item The space of system configurations with feasible `detectable jamming' attacks is identical to that of `detectable' attacks and is a superset of that of hidden attacks.
\item The cost of the optimal `detectable jamming' attack is never greater than the cost of optimal `detectable' attack and never greater than $.5+ 1/|C_h^*|$ times the cost of optimal `hidden' attack on a system, $|C_h*|$ being the cardinality of optimal `hidden' attack.
\end{itemize}
\end{corollary}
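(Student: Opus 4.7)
The plan is to bootstrap both bullet points directly from Lemma \ref{construction} and Theorem \ref{previous}, viewing the `detectable jamming' framework as a strict generalization of the `detectable' framework, and both as relaxations of the `hidden' one. For the first bullet, the feasibility condition in Lemma \ref{construction} requires some cut $C$ of $G_H$ satisfying $n^C_{S^c} > |C|/2$, which (rewriting $n^C_{S^c} = |C| - |C \cap S|$) is exactly the condition $|C \cap S| < |C|/2$ used in Theorem \ref{previous} for `detectable' feasibility. Hence the two feasible sets of system configurations coincide. A feasible `hidden' attack in contrast corresponds to a cut with $|C \cap S| = 0$, which trivially satisfies $|C \cap S| < |C|/2$; the converse can fail, so `hidden'-feasible configurations form a (potentially proper) subset of the other two.

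For the cost bound in the second bullet, I would first observe that plugging $k^C_J = 0$ into Lemma \ref{construction} exactly reproduces the feasible `detectable' attack construction of Theorem \ref{previous}, at cost $p_I \lfloor 1 + |C|/2 \rfloor$. Since the optimal `detectable jamming' cost minimizes over all $k^C_J \geq 0$ and all feasible cuts, the optimum can be no worse than the value attained at $k^C_J = 0$ for the cut achieving the optimal `detectable' attack. This immediately yields the first inequality.

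For the second inequality, let $C_h^*$ denote the optimal `hidden' attack cut, i.e., the smallest cardinality cut with zero secure edges; its cost is $p_I |C_h^*|$. Apply Lemma \ref{construction} to $C_h^*$ with $k^C_J = 0$: since every cut edge is insecure, the induced `detectable jamming' attack is feasible and injects into only $\lfloor 1 + |C_h^*|/2 \rfloor$ insecure edges, at cost at most $p_I (1 + |C_h^*|/2) = (0.5 + 1/|C_h^*|)\, p_I |C_h^*|$. Since the optimal `detectable jamming' attack is no more expensive than this particular feasible candidate, the ratio to the optimal `hidden' attack cost is bounded by $0.5 + 1/|C_h^*|$.

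The main obstacle is essentially notational --- keeping straight the three cut conditions (fully insecure, insecure-majority, and jammed-insecure-majority) and verifying that they nest as claimed. Beyond that bookkeeping, the argument reduces to two observations: (i) setting $k^C_J = 0$ in Lemma \ref{construction} recovers the `detectable' construction of Theorem \ref{previous}, and (ii) any `hidden'-feasible cut is trivially admissible as a candidate in Lemma \ref{construction}, so its cardinality controls the `detectable jamming' optimum up to the stated multiplicative factor.
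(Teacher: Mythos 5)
Your proposal is correct and follows essentially the same route as the paper: the feasibility sets coincide because the insecure-majority cut condition is shared by the `detectable' and `detectable jamming' frameworks and is implied by the all-insecure condition of `hidden' attacks, and both cost bounds follow by exhibiting the $k^C_J=0$ instantiation of Lemma \ref{construction} on the optimal `detectable' cut and on $C_h^*$ respectively. No gaps.
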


The first property arises as the set of cuts with majority of edges in $S^c$ (feasibility requirement of `detectable' and `detectable jamming' attacks) is a superset of the set of cuts will all edges in $S^c$ (feasibility requirement of `hidden' attacks). The second property has two parts: the first part follows from the fact that the optimal `detectable' attack is a feasible `detectable jamming' attack and hence not of lower cost that the optimal; the second part follows from the fact that injecting bad-data into $1 + \lfloor|C_h^*|\rfloor/2$ measurements of the optimal `hidden' attack constitutes a feasible `detectable' attack. It needs to be mentioned that these bounds reflect comparisons in the worst-case. The simulation results in Section \ref{sec:results} demonstrate that the average impact of `detectable jamming' attack is much more substantial. In the next section, we discuss the effect of jamming cost $p_J$ on the design of the optimal attack vector and its key properties.

\section{Effect of Jamming cost on Attack Construction}
\label{sec:jammingattack}
\begin{figure}[ht]
\centering
\includegraphics[width=0.50\textwidth]{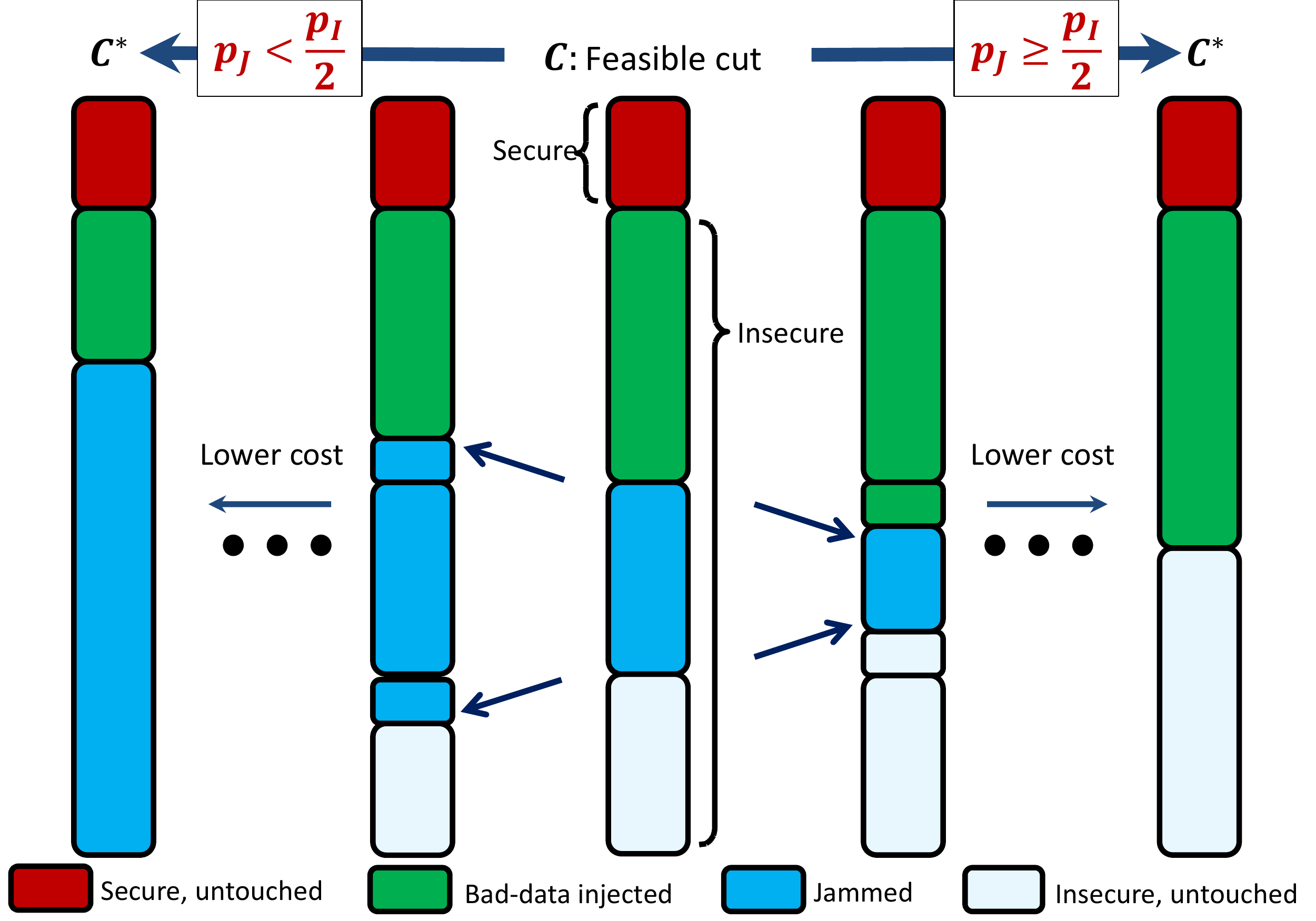}
\caption{Effect of jamming cost $p_J$ and bad-data injection cost $p_I$ on the minimum cost attack $C^*$ derived from a feasible cut $C$ with $n^C_S$ secure and $n^C_{S^c}$ insecure measurements. Secure, insecure but untouched, jammed, bad-data injected measurements in the cut are represented by red, white, blue and green colors respectively. When $p_J<p_I/2$, attack cost is reduced by replacing one bad-data injection with jamming two measurements as shown in the cuts on the left of $C$. For $p_J \geq p_I/2$, attack cost is reduced by replacing two jammed measurements by one measurement with bad-data injection while leaving the other untouched as shown on the right side of cut $C$. Optimal cuts $C^*$ got from this replacement are given by Theorem \ref{attackconstruction}.}
\label{fig:feasibleattack}
\end{figure}

As mentioned earlier, we consider the jamming cost $p_J$ to lie in the interval $[0, p_I]$ where $p_I$ is the bad-data injection cost. Consider a feasible cut $C$ with $n^C_{S^c}$ insecure edges and $n^C_S$ secure edges in the measurement graph $G_H$. Here $n^C_{S^c} > n^C_{S}$ as shown in Fig. \ref{fig:feasibleattack}. By Theorem \ref{construction}, a feasible `detectable jamming' attack comprises of selecting $(k^C_J \geq 0)$ and $(k^C_I =\lfloor1+ (|C|- k^C_J)/2\rfloor > 0)$ insecure edges for jamming and bad-data injection respectively, at a overall cost of $p^C$
\begin{align}
p^C &= p_Jk^C_J+p_I\lfloor1+ (|C|- k^C_J)/2\rfloor \nonumber\\
    &= (p_J-p_I/2)k^C_J+p_I\frac{|C|+2-(|C|- k^C_J)\mod 2}{2}\label{attackcost}
\end{align}
We divide the range of $p_J$ into two intervals: A ($p_J < p_I/2$) and B ($p_I/2 \leq p_J \leq p_I$). Note that in interval A, the cost $p^C$ is a decreasing function of $k^C_J$. Therefore, the minimum cost attack for feasible cut $C$ is obtained by jamming $n^C_{S^c}-n^C_{S}-1$ (the maximum permissible number of) insecure edges . The remaining $n^C_{S}+1$ insecure edges, greater than the number of secure edges by one, are injected with bad-data. The attack cost is given by
\begin{align}
p^C &= p_J(n^C_{S^c}-n^C_{S}-1)+p_I(n^C_{S}+1)\nonumber\\
    &= (p_I-p_J)n^C_{S}+ p_Jn^C_{S^c}+ (p_I-p_J)\label{costA}
\end{align}.
Ignoring constant $(p_I-p_J)$, this equals $C$'s cut-weight if secure and insecure edges are given weights of $(p_I-p_J)$ and $p_J$ respectively. Thus, if $p_J < p_I/2$, the optimal `detectable jamming' cut corresponds to the feasible cut $C^*$ with lowest cut-weight in $G_H$, where secure and insecure edges have weights of $(p_I-p_J)$ and $p_J$ respectively. Next consider interval B ($p_I/2 < p_J \leq p_I$). In Eq.~(\ref{attackcost}), if $k^C_J$ is reduced by $2$, the $(|C^*|- k^C_J)\mod 2$ term remains unchanged and the overall cost $p^C$ decreases. Hence the optical attack for cut $C$ corresponds to either $k^C_J = 0$ or $k^C_J = 1$, otherwise the attack cost can be reduced further. Checking the contribution of $(|C^*|- k^C_J)\mod 2$ term manually, we note that the optimal attack for cut $C$ is given by $(k^C_J = 0, k^C_I = (1+|C|)/2)$ for odd $|C|$, and $(k^C_J =1, k^C_I = |C|/2)$ for even $|C|$. 
In either case, the optimal attack cost is an increasing function of the cut-size $|C|$ expressed below.
\begin{align}
p^C &= p_J(1 - |C|\mod 2) + p_I\lfloor (1+|C|)/2 \rfloor \label{costB}
\end{align}
Thus, in interval B, the optimal `detectable jamming' attack corresponds to the feasible cut $C^*$ with lowest cut-size in $G_H$. We summarize this discussion by presenting our main theorem for optimal `detectable jamming' attack construction.

\begin{theorem}\label{attackconstruction}
The minimum cost `detectable jamming' attack for measurement graph $G_H$ with jamming cost $p_J$ and bad-data injection cost $p_I$ is constructed as follows.
\begin{itemize}
\item \textbf{$p_J < p_I/2$:} Give weights of $p_I-p_J$ and $p_J$ to secure and insecure edges respectively in $G_H$ and find the minimum weight feasible cut $C^*$ with $n^{C^*}_{S}$ secure edges. Use $(n^{C^*}_{S}+1)$ insecure measurements for bad-data injection and jam the rest. 
\item \textbf{$p_J \geq p_I/2$:} Find the minimum cardinality feasible cut $C^*$ in $G_H$. Use $\lfloor (1+|C^*|)/2 \rfloor$ insecure measurements for bad-data injection and jam $(1 - |C^*|\mod 2)$ measurement. 
\end{itemize}
\end{theorem}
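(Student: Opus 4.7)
The plan is to take Lemma \ref{construction} as the starting point, which already parametrizes every feasible attack by a pair $(C, k^C_J)$ consisting of a feasible cut and a chosen number of jammed edges, and to reduce the two-variable optimization to a graph-cut problem in each interval of $p_J$. Concretely, I would work directly from the cost expression in Eq.~(\ref{attackcost}),
\[
p^C = (p_J - p_I/2)k^C_J + p_I\,\frac{|C|+2-(|C|-k^C_J)\!\mod 2}{2},
\]
and, for fixed $C$, first minimize over $k^C_J$ subject to the feasibility constraint $n^C_{S^c} - k^C_J > n^C_S$ (so that a majority of surviving cut-edges remain insecure); then minimize over cuts.

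For the first interval $p_J < p_I/2$, the coefficient of $k^C_J$ is strictly negative, so for any fixed $C$ the cost is monotone decreasing in $k^C_J$ and the optimum is attained at its largest admissible value, $k^C_J = n^C_{S^c} - n^C_S - 1$. Plugging this into the expression and simplifying as in Eq.~(\ref{costA}) yields a cost that (up to the additive constant $p_I-p_J$) equals $(p_I - p_J)n^C_S + p_J n^C_{S^c}$. This is exactly the weight of the cut $C$ in $G_H$ when secure edges are assigned weight $p_I - p_J$ and insecure edges weight $p_J$, so the outer minimization over $C$ becomes a minimum-weight feasible cut problem, proving the first bullet.

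For the second interval $p_J \geq p_I/2$, the coefficient of $k^C_J$ is nonnegative but the floor term prevents a naive monotonicity argument. The key observation is that replacing $k^C_J$ by $k^C_J - 2$ leaves the parity term $(|C| - k^C_J)\!\mod 2$ invariant while strictly decreasing the linear term; hence for every $C$ the optimal $k^C_J$ lies in $\{0,1\}$. A short parity case-split then shows $k^C_J = 0$ is best when $|C|$ is odd and $k^C_J = 1$ is best when $|C|$ is even, yielding the compact formula in Eq.~(\ref{costB}). Since that formula is strictly increasing in $|C|$, the outer minimization reduces to finding the minimum-cardinality feasible cut, giving the second bullet.

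The only genuinely nontrivial step is the floor-function argument in interval B, which requires showing that shrinking $k^C_J$ by two is always improving and that the residual parity comparison singles out $k^C_J \in \{0,1\}$; everything else is linear bookkeeping plus a direct reinterpretation of the cost as a weighted cut. A minor care point is to check that the feasibility constraint $n^C_{S^c} - k^C_J > n^C_S$ is automatically satisfied by the optimal $k^C_J$ chosen in interval B (trivially, since $k^C_J \leq 1$ and $C$ is a feasible cut by assumption) and is exactly saturated in interval A, so no further restriction on $C$ beyond feasibility $n^C_{S^c} > n^C_S$ is introduced in either case.
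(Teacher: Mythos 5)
Your proposal is correct and follows essentially the same route as the paper: the paper's own justification is the derivation in Section~\ref{sec:jammingattack} built on Eq.~(\ref{attackcost}), with the identical monotonicity argument in $k^C_J$ for $p_J < p_I/2$ leading to the weighted-cut reading of Eq.~(\ref{costA}), and the identical ``decrease $k^C_J$ by $2$'' parity argument for $p_J \geq p_I/2$ leading to Eq.~(\ref{costB}) and the minimum-cardinality cut. Your closing remark checking that the surviving insecure edges still form a majority (saturated in interval A, and guaranteed in interval B because an even feasible cut forces $n^C_{S^c} \geq n^C_S + 2$) is a small but worthwhile addition that the paper leaves implicit.
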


The comparison of attack costs in `detectable jamming' attacks with that of standard `detectable' attacks is given by the following.
\begin{theorem}\label{cots improvement}
\begin{itemize}
Let cut $C^*$ with $n^{C^*}_S$ secure and $n^{C^*}_{S^c}$ insecure edges correspond to the optimal `detectable' attack (no jamming). The cost of optimal `detectable jamming' attack satisfies the following bounds.
\item For $p_J < p_I/2$, the cost of the optimal `detectable jamming' attack is less than that of the optimal `detectable' attack cost by at least $(p_I -2p_J)\lfloor\frac{n^{C^*}_{S^c}-n^{C^*}_S}{2}\rfloor + p_J (1 -|C^*| \mod 2)$.
\item For $p_J\geq p_I/2$, the cost of the optimal `detectable jamming' attack is less than that of `detectable' attack by $p_I-p_J$ (if $|C^*|$ is even), and equal otherwise.
\end{itemize}
\end{theorem}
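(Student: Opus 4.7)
The plan is to bound the optimal `detectable jamming' cost from above by evaluating the jamming strategy of Theorem~\ref{attackconstruction} on the specific cut $C^*$ that achieves the optimal `detectable' attack. By Theorem~\ref{previous}, this `detectable' cost equals $p_I(\lfloor|C^*|/2\rfloor + 1)$, obtained by injecting bad-data into $\lfloor 1 + |C^*|/2\rfloor$ insecure edges of $C^*$. Since $n^{C^*}_{S^c} > |C^*|/2$, $C^*$ is also a feasible cut for Problem~\ref{opt_attacknew1}, so applying Theorem~\ref{attackconstruction}'s construction to $C^*$ gives a valid upper bound on the optimal jamming cost and hence a lower bound on the improvement.

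For the interval $p_J \geq p_I/2$, I would first observe that the optimality criterion in Theorem~\ref{attackconstruction} (minimum-cardinality feasible cut) coincides with that of Theorem~\ref{previous}, so the two problems share the same optimal cut $C^*$ and the comparison becomes exact rather than a one-sided bound. Applying Theorem~\ref{attackconstruction} to $C^*$ yields cost $p_J(1 - |C^*|\mod 2) + p_I \lfloor (1+|C^*|)/2 \rfloor$. A split on the parity of $|C^*|$ then recovers the claim: equality when $|C^*|$ is odd (both strategies avoid jamming and inject $(|C^*|+1)/2$ edges), and an improvement of exactly $p_I - p_J$ when $|C^*|$ is even (one bad-data injection is swapped for a single jam).

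For the interval $p_J < p_I/2$, I would apply the maximal-jamming prescription of Theorem~\ref{attackconstruction}: jam $n^{C^*}_{S^c} - n^{C^*}_S - 1$ insecure edges of $C^*$ and inject into the remaining $n^{C^*}_S + 1$. The cost is $p_J(n^{C^*}_{S^c} - n^{C^*}_S - 1) + p_I(n^{C^*}_S + 1)$, giving an improvement over the `detectable' cost of $p_I(\lfloor |C^*|/2 \rfloor - n^{C^*}_S) - p_J(n^{C^*}_{S^c} - n^{C^*}_S - 1)$. Using $|C^*| = n^{C^*}_S + n^{C^*}_{S^c}$, a short parity check gives $\lfloor |C^*|/2 \rfloor - n^{C^*}_S = \lfloor (n^{C^*}_{S^c} - n^{C^*}_S)/2 \rfloor$, and a second split on the parity of $|C^*|$ then rearranges the expression into the claimed $(p_I - 2p_J)\lfloor (n^{C^*}_{S^c} - n^{C^*}_S)/2 \rfloor + p_J(1 - |C^*|\mod 2)$. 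Because this is only an upper bound on the optimal jamming cost (the weighted min-cut of Theorem~\ref{attackconstruction} need not be $C^*$), the improvement is \emph{at least} this quantity, matching the statement.

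The main difficulty here is just the careful bookkeeping of the floor and mod-$2$ operations across the parity cases; the algebra is mechanical once the candidate attack is fixed. The only subtle conceptual point is noticing that when $p_J \geq p_I/2$ the two problems share the same optimal cut so the bound is tight in both parity cases, whereas when $p_J < p_I/2$ the weighted min-cut of Theorem~\ref{attackconstruction} can genuinely differ from $C^*$, which is exactly why the theorem claims only a one-sided ``at least'' bound in that regime.
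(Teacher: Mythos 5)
Your proposal is correct and follows essentially the same route as the paper: for $p_J \geq p_I/2$ you note both problems select the minimum-cardinality feasible cut and compare costs exactly by parity, and for $p_J < p_I/2$ you evaluate the maximal-jamming strategy on the `detectable'-optimal cut $C^*$ to obtain a one-sided bound, which is precisely the paper's argument. Your explicit parity bookkeeping (in particular $\lfloor |C^*|/2\rfloor - n^{C^*}_S = \lfloor (n^{C^*}_{S^c}-n^{C^*}_S)/2\rfloor$) just fills in algebra the paper leaves implicit.
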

\begin{proof}
For $p_J \geq p_I/2$, using Theorem \ref{previous} and Theorem \ref{attackconstruction}, it follows that the optimal cuts for `detectable' and `detectable jamming' attacks are identical. The difference is costs follows immediately from the attack construction using the optimal cut $C^*$ in either case. For $p_J < p_I/2$, note that the minimum-cost `detectable jamming' attack for feasible cut $C^*$ is given by injecting bad-data into $n^{C^*}_S+1$ edges and jamming the other insecure edges. The difference in cost between `detectable jamming' attack and `detectable' attack for cut $C^*$ is thus:
\begin{align}
\delta &= p^I\lfloor 1 +\frac{n^{C^*}_{S^c} + n^{C^*}_{S}}{2}\rfloor - p_I(n^{C^*}_{S}+1) - p_J(n^{C^*}_{S^c}- n^{C^*}_{S}-1)\nonumber\\
& = (p_I -2p_J)\lfloor\frac{n^{C^*}_{S^c}-n^{C^*}_S}{2}\rfloor + p_J (1 -|C^*| \mod 2)\label{bound}
\end{align}
As $C^*$ is a feasible `detectable jamming' attack (not necessarily optimal) in this case, Eq.~ \ref{bound} gives a lower bound on the difference in optimal costs.
\end{proof}
Further, the following statements holds:
\begin{corollary}\label{specialcases}
\begin{itemize}
\item For $p_J = 0$ (minimum jamming cost), the optimal `detectable jamming' attack corresponds to the cut $C^*$, which has the minimum number of secure edges among all feasible cuts.
\item For $p_J = 0$, if a `hidden' attack exists, an optimal `detectable jamming' attack corresponds to the same cut $C^*$.
\end{itemize}
\end{corollary}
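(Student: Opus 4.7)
My plan is to prove both bullets as direct specializations of Theorem \ref{attackconstruction} to the regime $p_J = 0$. Since $p_I > 0$, we have $p_J = 0 < p_I/2$ strictly, so the first branch of Theorem \ref{attackconstruction} applies and I do not have to worry about the boundary case.

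For the first bullet, I would substitute the edge weights prescribed by Theorem \ref{attackconstruction}: secure edges get weight $p_I - p_J = p_I$ and insecure edges get weight $p_J = 0$. The weight of any feasible cut $C$ then collapses to $p_I \, n^{C}_{S}$, so minimizing the weighted cut over feasible cuts is equivalent to minimizing $n^{C}_{S}$ alone. As a consistency check I would instantiate Eq.~(\ref{costA}) at $p_J = 0$, which yields $p^{C} = p_I (n^{C}_{S} + 1)$, a strictly increasing function of $n^{C}_{S}$, confirming that the minimum-secure-edge feasible cut is optimal and that the associated attack injects bad-data into $n^{C^{*}}_{S} + 1$ insecure edges while jamming the remaining insecure edges in the cut.

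For the second bullet, I would first recall that a `hidden' attack exists precisely when $G_H$ admits a cut all of whose edges lie in $S^{c}$, i.e., a cut with $n^{C}_{S} = 0$. Any such cut is automatically feasible for Problem \ref{opt_attacknew1}, since $n^{C}_{S^{c}} \geq 1 > 0 = n^{C}_{S}$ trivially satisfies the majority condition of Lemma \ref{construction}. Combining this with the first bullet, every zero-secure-edge cut attains the minimum `detectable jamming' cost $p_I$; in particular the optimal `hidden' attack cut, which has zero secure edges by construction, is itself an optimal `detectable jamming' cut, proving the claim.

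I do not anticipate a serious obstacle. The only delicate point is that the optimal `hidden' attack cut is chosen to have minimum cardinality among zero-secure-edge cuts, which is a different objective than `detectable jamming' uses; one might worry the two optima disagree. They do not, because at $p_J = 0$ the `detectable jamming' cost is completely insensitive to $n^{C}_{S^{c}}$ (jamming is free), so all zero-secure-edge cuts are tied in cost and the hidden-attack optimum is necessarily among the `detectable jamming' optima. That single observation is what makes the equivalence of cuts clean, and it is the only step I would emphasize in the write-up.
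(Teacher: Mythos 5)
Your proposal is correct and matches the paper's intent: the paper gives no separate proof for this corollary precisely because it is the direct specialization of Theorem \ref{attackconstruction} (equivalently Eq.~(\ref{costA})) at $p_J = 0$, which is exactly the argument you give. Your added observation that all zero-secure-edge cuts are tied in cost at $p_J = 0$, so the optimal `hidden' cut is necessarily among the `detectable jamming' optima, is the right way to handle the second bullet.
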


Finally, the following theorem presents the potency of `detectable jamming' attacks by a lower bound on the number of secure measurements required for complete security.

\begin{theorem}\label{protection}
A system is always vulnerable to `detectable jamming' attacks if less than half the total number of measurements are secure.
\end{theorem}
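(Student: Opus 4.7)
The plan is a short averaging argument over the ``elementary'' cuts of $G_H$. By Lemma \ref{construction}, a `detectable jamming' attack is feasible as soon as $G_H$ admits any cut $C = \partial T$ arising from a non-empty $T \subseteq V \setminus \{r\}$ (where $r$ denotes the reference node, constrained by $c(r)=0$) with $n^C_{S^c} > n^C_S$. So my first step would be to reduce the vulnerability claim to this purely combinatorial question: under the hypothesis $|S| < m/2$, produce such a cut.

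To do so, I would examine the $n+1$ singleton-style cuts $\{\partial\{v\}\}_{v \in V}$. Each one is realized by an admissible binary nodal vector: for non-reference $v$ take $c = e_v$, and for $v = r$ take $c = \mathbf{1} - e_r$ (which yields exactly the edge set $\partial\{r\}$ while respecting $c(r)=0$). Defining the signed cut weight $W(C) := n^C_{S^c} - n^C_S$, the handshake identity gives
\[
\sum_{v \in V} W(\partial\{v\}) \;=\; \sum_{v \in V} \bigl(n^v_{S^c} - n^v_S\bigr) \;=\; 2\bigl(|S^c| - |S|\bigr) \;=\; 2(m - 2|S|),
\]
because every edge contributes once at each of its two endpoints. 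Under the hypothesis $|S| < m/2$ the right-hand side is strictly positive, so at least one of the $n+1$ elementary cuts must satisfy $W > 0$, i.e., have a strict majority of insecure edges. Invoking Lemma \ref{construction} on that cut produces a feasible `detectable jamming' attack, which is exactly vulnerability.

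The only mildly delicate step I anticipate is the bookkeeping that turns the co-singleton $\partial\{r\}$ into an admissible attack vector (via $c = \mathbf{1}-e_r$); after that, the whole argument reduces to the handshake lemma and I do not expect any substantive obstacle. Notably, none of the finer cost analysis from Theorem \ref{attackconstruction} is needed, because the statement asserts only the \emph{existence} of a feasible attack, not its optimality; and full-rank conditions such as (\ref{cond4j}) are already subsumed by the cut-based feasibility criterion of Lemma \ref{construction}.
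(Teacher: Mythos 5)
Your proposal is correct and follows essentially the same route as the paper: the paper's proof also reduces to finding a single bus whose incident edges (a singleton cut) contain a majority of insecure measurements, asserting its existence from $|S|<m/2$, which is exactly the handshake/averaging count you spell out. Your version is simply more explicit, both in the counting identity and in handling the reference node via $c=\mathbf{1}-e_r$, which the paper glosses over.
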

\begin{proof}
Consider the graph $G_H$ generated from the measurement system. A feasible `detectable jamming' attack requires a cut in $G_H$ with a majority of insecure edges. As less than half of the measurements in $G_H$ are secure, there is at least one bus connected with a majority of insecure edges. Thus, a feasible `detectable jamming' attack can be constructed using that bus's edges as the cut. Hence proved.
\end{proof}

Note that Theorem \ref{protection} provides a $O(|E|)$ lower bound on the minimum number of secure measurements required for complete security, that scales with the total number of measurements. In contrast, complete protection from `hidden' attacks require a maximum of $|V|-1$ secure measurements  \cite{hidden, deka1}, that is much lesser that in general graphs. In Section \ref{sec:results}, we show simulations that confirm that `detectable jamming' attacks are more resilience to presence of secure measurements than `hidden' attacks. In the next Section, we present our algorithm to construct the optimal attack described in Theorem \ref{attackconstruction} and Corollary \ref{specialcases}.

\section{Algorithm For Attack Construction}
\label{sec:algo}
To confirm the existence of a feasible attack, we need to identify a feasible cut with a majority of insecure edges in the graph. Theorem $3$ in \cite{dekaISGT} proves that this is equivalent to the `ration-cut' problem, a known NP-hard problem. Thus, the design of the optimal `detectable jamming' attack, in the worst case, is hard as well.

We now provide an approximate algorithm (Algorithm $1$) for attack vector construction. For $p_J<p_I/2$ (interval A), we create weighted graph $G_H$ with secure (insecure) edges having weight $p_I-p_J$ ($p_J$). For $p_J \geq p_I/2$ (interval B), we consider unweighted $G_H$. Using Theorem \ref{attackconstruction}, the optimal attack, in either case, is given by the minimum weighted feasible cut in $G_H$.

\textbf{Working:} Algorithm $1$ proceeds by computing the minimum weight cut $C$ in $G_H$ (Step \ref{step1}) and checks if it is a feasible cut (Step \ref{step2}). If $C$ is infeasible, one secure edge is selected randomly in $C$ and its edge-weight is increased by $\beta$ (Step \ref{step3}). We consider two cases, one where $\beta$ is taken as the secure edge-weight and the other where it is taken as $\infty$. Following the increase, the algorithm recomputes the minimum weight cut and checks for feasibility. This process is iterated until a feasible cut is obtained (construct the attack vector) or the cut-weight reaches a threshold $\gamma<\infty$(declare no solution).
\begin{algorithm}
\caption{`Detectable Jamming' Attack Construction}
\textbf{Input:} Graph $G_H$ with secure and insecure edges weighted based on $p_J, p_I$, $S, S^c, \beta,\gamma$ \\
\begin{algorithmic}[1]
\STATE Compute min-weight cut $C$ in $G_H$ \label{step1}
\STATE $w_C \gets $ weight of $C$
\WHILE {($w_C < \gamma, 2|C \bigcap S| \geq |C|$)} \label{step2}
\STATE Randomly pick edge $i \in C \bigcap S$ and increase its weight by $\beta$ \label{step3}
\STATE Compute min-weight cut $C$ in $G_H$
\STATE $w_C \gets $ weight of $C$
\ENDWHILE
\IF {$2|C \bigcap S| < |C|$}
\STATE Construct attack vector using Theorem \ref{attackconstruction}
\ELSE
\STATE Declare no solution
\ENDIF
\end{algorithmic}
\end{algorithm}

Note that for $\beta = \infty$, in the worst case, there are $|S|$ min-cut computations (one for each secure edge) of complexity $O(|V||E|+|V|^2\log|V|)$ giving the algorithm a computational complexity of $O(|S||V||E|+|S||V|^2\log|V|)$. However, as the algorithm is approximate, it might not return a solution in every case. In the next section, we show simulation results on designing optimal attacks by Algorithm $1$ in IEEE test systems. We also demonstrate the capacity of `detectable jamming' attacks in overcoming high placement of secure measurements in the systems considered.

\section{Results on IEEE test systems}
\label{sec:results}
We discuss the performance of Algorithm $1$ in designing `detectable jamming' attacks by simulations on IEEE $14$-bus and $57$-bus test systems \cite{testsystem}. In each simulation run, we put flow measurements on all lines in the test system considered and phase angle measurements on $60\%$ (randomly selected) of the system buses. Over multiple simulations, we vary the fraction of secure measurements and record the trends in average cost of constructing `detectable jamming' attack. We consider either interval of jamming cost ($p_J=0$, $p_J < p_I/2$ and $p_J > p_I/2$), and different values of parameter $\beta$ (finite and $\infty$) in Algorithm $1$. The trends in average optimal cost of `detectable jamming' attacks for the $14$ bus system are presented in Fig.~\ref{fig:14buspossible} (for configurations that allow feasible `hidden' attacks), and Fig.~\ref{fig:14busno} (for configurations that are resilient to `hidden' attacks). To demonstrate the efficacy of our approach, we compare the trends with average costs of constructing `hidden' and `detectable' (no jamming) attacks. Note that while the average attack cost is way below the upper bound  (Corollary \ref{constcorollary}) in Fig.~\ref{fig:14buspossible}, it is observed to eventually decrease with increasing secure measurements in the system. This trend results from the fact that system configurations resilient to attacks that increase with increasing secure measurements are not accounted for in the plotted average attack costs. Further, it is apparent that changing the value of $\beta$ does not affect the performance of Algorithm $1$ much. Similarly, Fig.~\ref{fig:57bus} includes the average cost trends for the $57$ bus system, with $\beta$ in Algorithm $1$ being taken equal to the weight of secure measurements. From the figures it is clear that jamming enabled attacks have significantly reduced costs over both `hidden' and `detectable' attacks. Finally, Fig.~\ref{fig:secure} plots the increase in number of completely resilient operating regimes (no feasible attack possible) with an increase in the number of secure measurements in the system. It is easily evidenced in Fig.~\ref{fig:secure} that compared to `hidden' attacks, `detectable' and `detectable jamming' attacks pose a much greater threat to the grid vulnerability as the number of secure operating regimes in the latter hardly increases with an increase in the number of secure measurements. This is in line with the security needs highlighted in Theorem \ref{protection}.
The simulations prove the dual adversarial benefits created by `detectable jamming' attacks: lowering of attack cost and increased insensitivity to deployment of incorruptible measurements.

\begin{figure}[ht]
\centering
\includegraphics[width=0.5\textwidth]{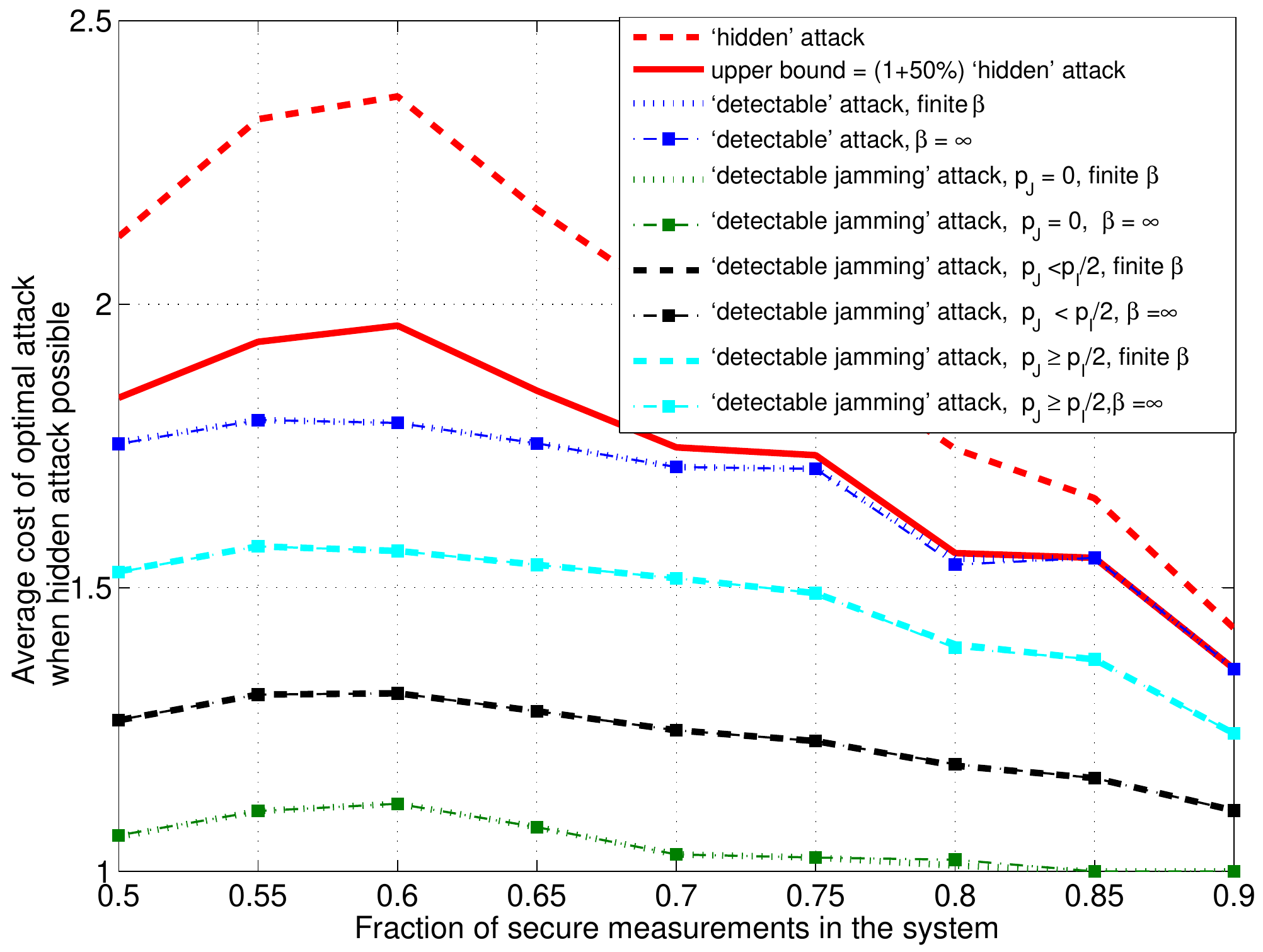}
\caption{Average cost of optimal attacks (`hidden', `detectable' and `detectable jamming') produced for different values of $\beta$ (size of secure edge and $\infty$) by Algorithm $1$ on the IEEE $14$ bus test system with flow measurements on all lines, phasor measurements on $60\%$ of the buses and protection on a fraction of measurements selected randomly. The bad-data injection cost ($p_I$) is taken as $1$. For the `detectable jamming' attack, the jamming costs ($p_J$) considered are $0, 1/4 (< p_I/2), 3/4 (> p_I/2)$. Only configurations where `hidden' attacks are possibly are considered to determine the average costs.}
\label{fig:14buspossible}
\end{figure}
\begin{figure}
\centering
\includegraphics[width=0.485\textwidth]{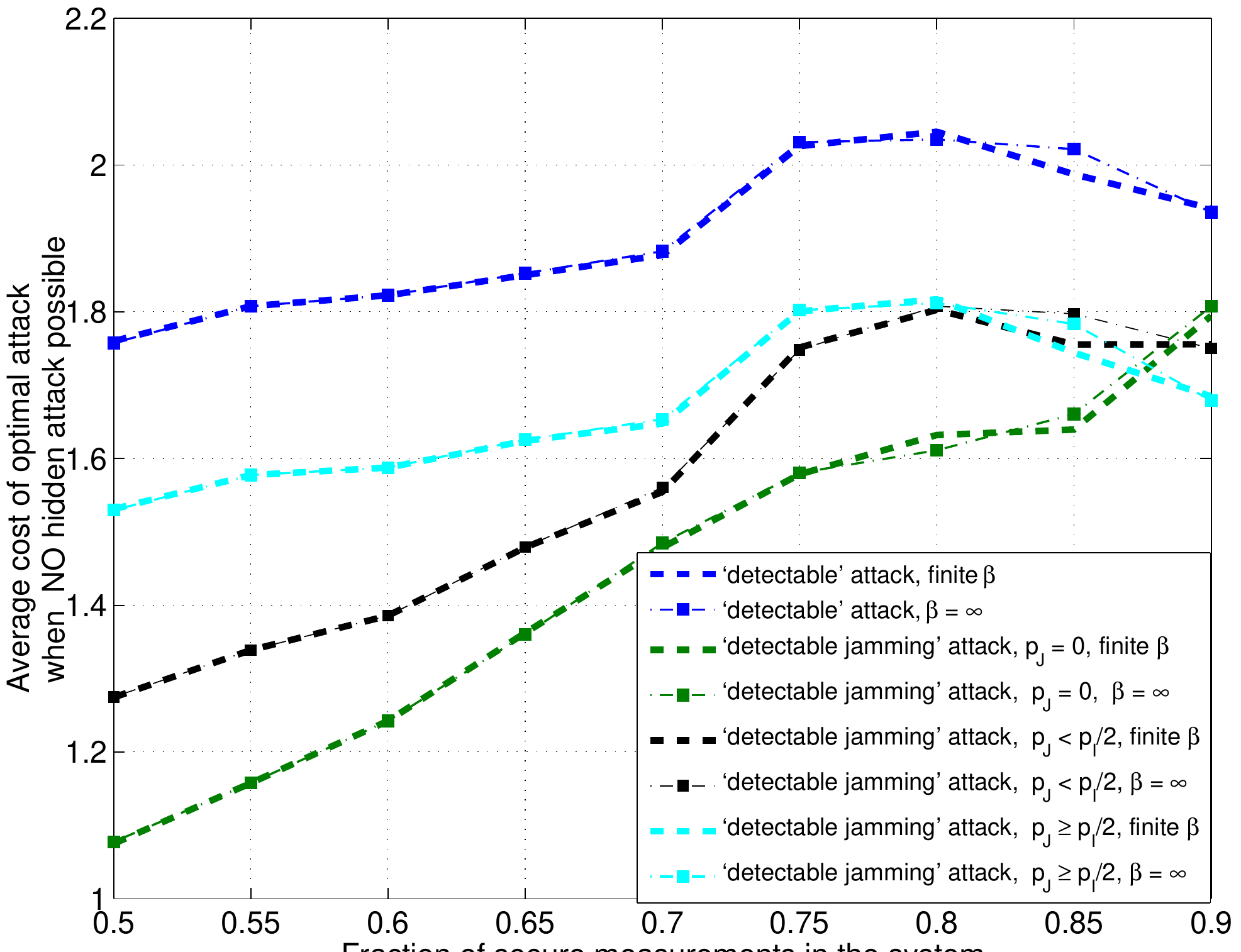}
\caption{Average cost of optimal attacks (`detectable' and `detectable jamming') produced for different values of $\beta$ (size of secure edge and $\infty$) by Algorithm $1$ on the IEEE $14$ bus test system with flow measurements on all lines, phasor measurements on $60\%$ of the buses and protection on a fraction of measurements selected randomly. The bad-data injection cost ($p_I$) is taken as $1$. For the `detectable jamming' attack, the jamming costs ($p_J$) considered are $0, 1/4 (<p_I/2), 3/4 (>p_I/2)$. Only configurations which are resilient against `hidden' attacks are considered to determine the average costs.}
\label{fig:14busno}
\end{figure}
\begin{figure}
\centering
\includegraphics[width=0.5\textwidth]{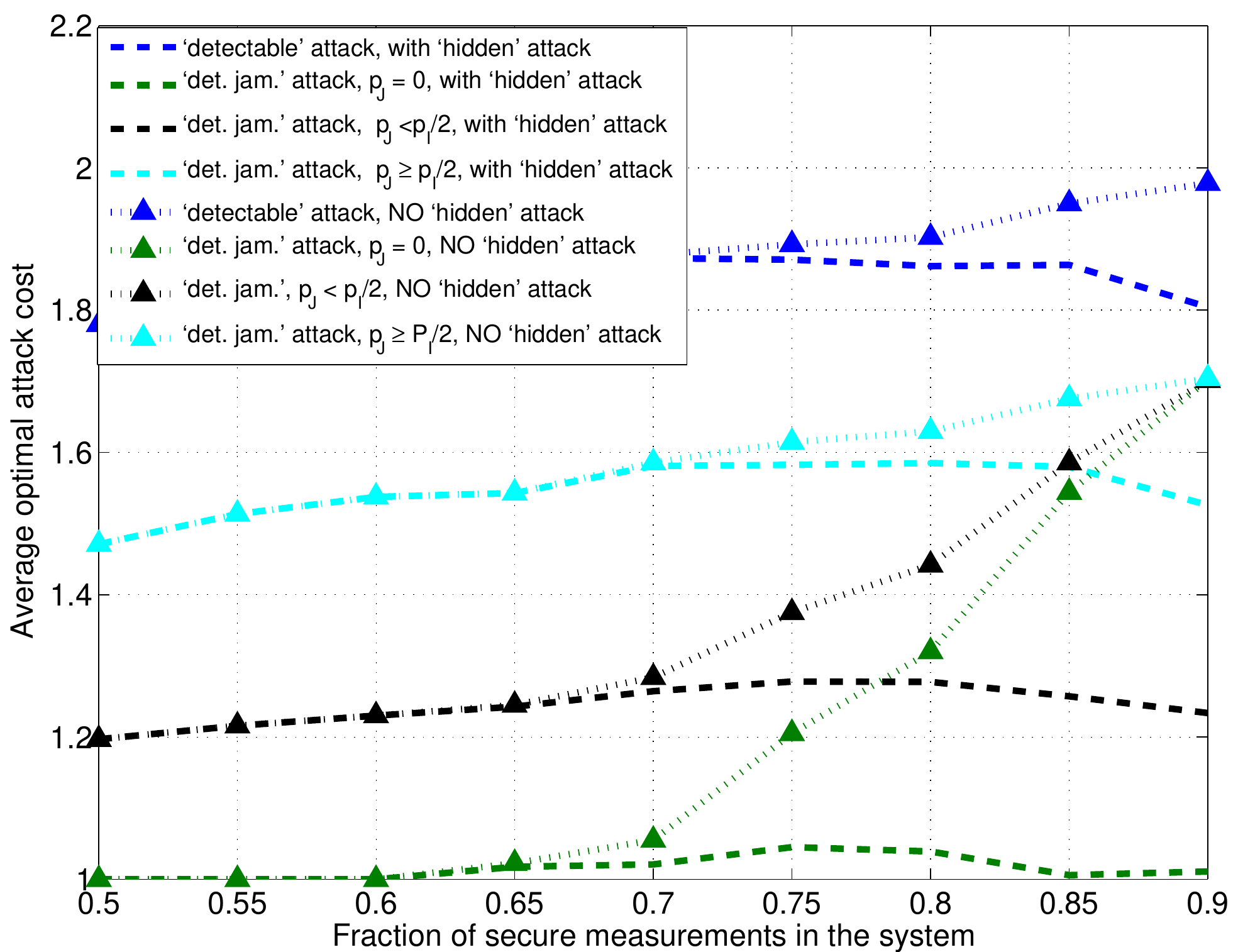}
\caption{Average cost of optimal attacks (`detectable' and `detectable jamming') produced by Algorithm $1$ (with finite $\beta$) on the IEEE $57$ bus test system with flow measurements on all lines, phasor measurements on $60\%$ of the buses and protection on a fraction of measurements selected randomly. The bad-data injection cost ($p_I$) is taken as $1$. For the `detectable jamming' attack, the jamming costs ($p_J$) considered are $0, 1/4 (<p_I/2), 3/4 (>p_I/2)$.}
\label{fig:57bus}
\end{figure}
\begin{figure}
\centering
\includegraphics[width=0.5 \textwidth]{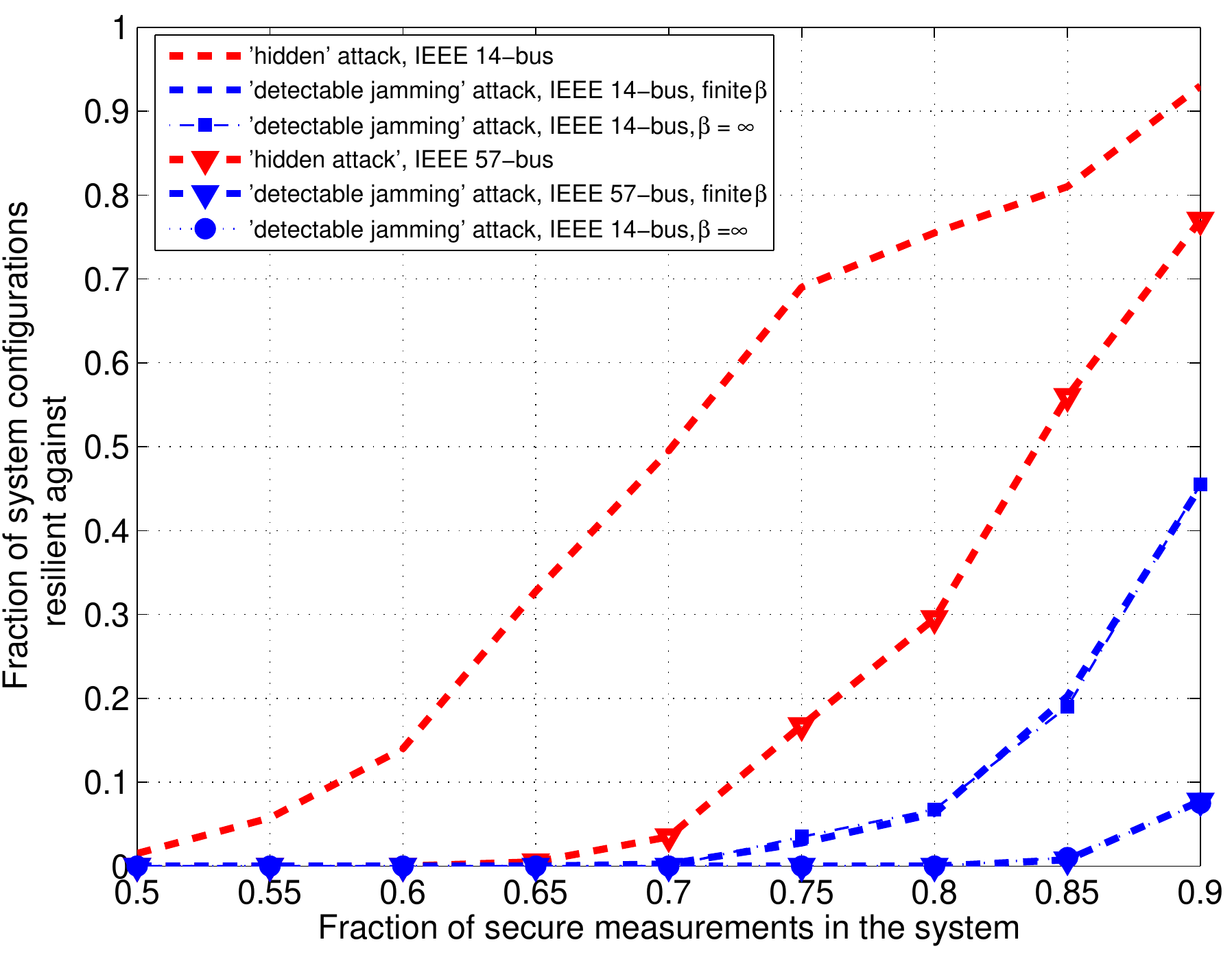}
\caption{Average fraction of simulated configurations with no feasible `hidden' and `detectable jamming' attacks given by Algorithm $1$ for different values of $\beta$ in IEEE $14$ and $57$ bus test systems. Each test system has flow measurements on all lines, phasor measurements on $60\%$ of the buses and protection on a fraction of measurements selected randomly.}
\label{fig:secure}
\end{figure}

\section{Conclusion}
\label{sec:conclusion}
We introduce a new data attack framework on power grids termed `detectable jamming' attacks, where an adversary uses measurement jamming as a tool in addition to changing meter readings (bad-data injection). Through the use of these dual techniques on an optimal set of measurements, the adversary creates a violation of the bad-data detection test but still creates a change in the estimated state vector. This is ensured by leading the state estimator to incorrectly label uncorrupted correct data as bad-data. We show that the design of the minimum cost attack of this regime is shown to be equivalent to a constrained graph cut problem that takes two different forms, dependent on the relative values of jamming and data injection costs. We prove that even the worst-case attack cost of `detectable jamming' attacks is approximately half of the optimal `hidden' attack cost, while the capability to overcome incorruptible measurements is much more pronounced than in `hidden' attacks. This is highlighted by the fact that the number of secure measurements required for complete resilience against `hidden' attack is of the order of number of buses in the system, while complete resilience against `hidden' attacks requires greater than half the measurements to be incorruptible and scales with the number of edges in the measurement graph. We further show that in comparison to `detectable' (no jamming) attacks, our jamming reliant framework significantly alters the optimal attack (given by the optimal graph cut) only if the jamming cost is less than half the cost of bad-data injection. For values of jamming cost greater than half the injection cost, `detectable jamming' attacks have a lower attack cost but correspond to the same optimal graph cut as `detectable' attacks. As the design of the optimal attack is NP hard in general, we present an iterative min-cut based approximate algorithm with polynomial complexity to determine the optimal cut. We demonstrate the adversarial benefits of our proposed attack framework and performance of our approximate algorithms through simulations on IEEE test cases for different values of jamming costs and different system conditions. This paper exposes the adverse effects to grid security posed by measurement jamming when used as an adversarial tool to supplement `bad-data' injection. Designing optimal security measures against this attack regime is the object of our current research in this domain.

\end{document}